\theoremstyle{plain}
\newtheorem{theorem}{Theorem}
\newtheorem{lemma}{Lemma}
\theoremstyle{definition}
\newtheorem{definition}{Definition}
\theoremstyle{remark}
\newtheorem*{remark}{Remark}
\DeclareMathOperator{\True}{True}
\DeclareMathOperator{\False}{False}
\title[Heavy Columns in Binary Matrices]{On the Existence of Heavy Columns in Binary Matrices with Distinct Rows}
\date{}
\begin{document}

\maketitle

\vspace{-2em}
\begin{center}
Jamolidin K. Abdurakhmanov\\[0.3em]
Associate Professor, PhD\\
Department of Software Engineering\\
Andijan State University\\
Andijan, Uzbekistan\\
\texttt{jamolidinkamol@gmail.com, abduraxmanov@adu.uz}\\
ORCID: 0009-0008-0067-0553
\end{center}
\vspace{1em}

\noindent\textbf{Abstract.}
We investigate the existence of heavy columns in binary matrices with distinct rows. A column of an $m \times n$ binary matrix is called \emph{heavy} if the number of ones in it is at least $\lceil m/2 \rceil$. We introduce two recursive algorithms, $A_1$ and $A_2$, that examine properties of submatrices obtained by row filtering and column deletion. We prove that if algorithm $A_1$ returns $\True$ for a binary matrix with distinct rows, then the matrix contains at least one heavy column (Theorem~1). Furthermore, we prove that if algorithm $A_2$ returns $\True$ for a binary matrix with distinct rows, distinct columns, and no all-zero columns, then the matrix also contains at least one heavy column (Theorem~2). The key innovation in $A_2$ is an early termination condition: if exactly one row has a zero in some column, that column is immediately identified as heavy. The proofs employ a novel argument based on the existence of unpaired rows with respect to specific columns, combined with careful analysis of the recursive structure of the algorithms.

\vspace{0.5em}
\noindent\textbf{2020 Mathematics Subject Classification.} Primary 05B20; Secondary 05D99, 68R05.

\noindent\textbf{Key words and phrases.} Binary matrix, heavy column, recursive algorithm, combinatorics, majority, voting theory.

\section{Introduction}

Binary matrices arise naturally in numerous areas of discrete mathematics and computer science, including coding theory, combinatorial optimization, database theory, and social choice theory. The study of structural properties of binary matrices, particularly those with distinctness constraints on rows or columns, has a rich history dating back to the classical work of Ryser~\cite{Ryser1957, Ryser1963} and Gale~\cite{Gale1957}, and continues to yield interesting results~\cite{Brualdi2006, Barvinok2010}.

In this paper, we investigate a natural question: under what conditions does a binary matrix with distinct rows contain a \emph{heavy column}, that is, a column in which the number of ones is at least as large as the number of zeros? This question has direct connections to problems in voting theory, where heavy columns correspond to candidates approved by a majority of voters~\cite{Aziz2017}, to covering problems in combinatorics, and to the analysis of Boolean functions in cryptography~\cite{Carlet2010}.

We approach this problem algorithmically by introducing two recursive algorithms that systematically examine submatrices obtained through row filtering and column deletion. Our main results establish that the return value of these algorithms provides a sufficient condition for the existence of heavy columns. The proof technique---based on the identification of unpaired rows and the construction of specific paths through the recursion trees---appears to be novel in the context of binary matrix theory.

The paper is organized as follows. In Section~\ref{sec:related}, we discuss related work and the novelty of our results. In Section~\ref{sec:applications}, we describe potential applications. In Section~\ref{sec:preliminaries}, we introduce the basic definitions and notation. In Section~\ref{sec:algorithm1}, we present algorithm $A_1$ and prove Theorem~\ref{thm:main1}. In Section~\ref{sec:algorithm2}, we present algorithm $A_2$ and prove Theorem~\ref{thm:main2}. Section~\ref{sec:conclusion} contains concluding remarks and open problems.

\section{Related Work and Novelty}\label{sec:related}

\subsection{Classical Results on Binary Matrices}

The combinatorial theory of binary matrices has a rich history. The foundational \textbf{Gale-Ryser theorem}~\cite{Gale1957, Ryser1957} characterizes when a pair of sequences can be realized as the row sums and column sums of a binary matrix. Specifically, sequences $(r_1, \ldots, r_m)$ and $(c_1, \ldots, c_n)$ are realizable if and only if certain majorization conditions are satisfied. This theorem has been extended in numerous directions, including asymptotic counting results by Barvinok~\cite{Barvinok2010}.

The \textbf{Sauer-Shelah-Perles lemma}~\cite{Sauer1972, Shelah1972} provides bounds on families of binary vectors based on VC dimension: if a family $\mathcal{F} \subseteq \{0,1\}^n$ does not shatter any set of size $d+1$, then $|\mathcal{F}| \leq \sum_{i=0}^{d} \binom{n}{i}$. While this lemma concerns subset relationships rather than column weights, it represents another classical constraint on binary matrix structure.

Brualdi's comprehensive treatise~\cite{Brualdi2006} covers combinatorial matrix classes including $(0,1)$-matrices, permanents, and Latin squares, establishing much of the modern vocabulary for the field. Ryser's earlier monograph~\cite{Ryser1963} remains influential for its treatment of combinatorial properties of matrices.

\subsection{Tournament Matrices and Score Sequences}

Tournament matrices---where entry $(i,j) = 1$ indicates that player $i$ defeated player $j$---provide another classical context for binary matrix theory. \textbf{Landau's theorem}~\cite{Landau1953} characterizes realizable score sequences: a sequence $(s_1, \ldots, s_n)$ with $s_1 \leq \cdots \leq s_n$ is the score sequence of some tournament if and only if $\sum_{i=1}^{k} s_i \geq \binom{k}{2}$ for all $k$. Recent work by Kolesnik~\cite{Kolesnik2023} has determined the asymptotic number of score sequences.

\subsection{Novelty of Our Results}

Despite this rich literature, we have not found any existing theorem that establishes conditions guaranteeing the existence of heavy columns in binary matrices based on row distinctness. The Gale-Ryser theorem addresses existence of matrices with prescribed sums but does not consider conditions under which matrices \emph{must} possess columns exceeding a weight threshold. The Sauer-Shelah lemma constrains family sizes but does not directly address column weights.

The closest existing work involves pigeonhole arguments on binary matrices---for example, proving that a $14 \times 14$ matrix with 58 ones must contain an all-ones $2 \times 2$ submatrix. These use related double-counting techniques but address fundamentally different structural questions.

\textbf{Our contribution fills a gap} by establishing that row distinctness, combined with specific algorithmic conditions, guarantees the existence of heavy columns. Moreover, the recursive proof technique---filtering rows by column values and analyzing the resulting submatrices---appears to be methodologically novel for establishing column weight properties.

\section{Applications}\label{sec:applications}

The concept of heavy columns has natural interpretations across several domains.

\subsection{Voting Theory and Social Choice}

In approval voting, voters (rows) indicate which candidates (columns) they approve by placing a 1 in the corresponding entry. A \emph{heavy column} corresponds to a candidate approved by at least half of the voters---a natural notion of majority support.

The \emph{justified representation} criterion in committee selection~\cite{Aziz2017} requires that sufficiently large groups of voters with common approved candidates receive representation. This is essentially a local column-weight condition. Our results provide sufficient conditions under which at least one candidate must have majority approval when voters have distinct preference profiles.

\subsection{Coding Theory}

In coding theory, the columns of a parity-check matrix determine error-detection and error-correction capabilities. Reed's majority-logic decoding for Reed-Muller codes~\cite{Reed1954} recovers each message symbol by taking majority votes across independent recovery sets. The covering radius theorem connects column properties of parity-check matrices directly to error-correcting capability.

Recent work by Ly and Soljanin~\cite{Ly2025} has shown that one-step majority-logic decoders can correct at most $d_{\min}/4$ errors for Reed-Muller codes, highlighting the continued relevance of majority-based analysis in coding theory.

\subsection{Cryptography}

Cryptographic security often requires \emph{balanced Boolean functions}---those with an equal number of ones and zeros in their truth tables. Unbalanced functions are vulnerable to correlation attacks. The study of $k$-th order correlation immunity relates to orthogonal array structures~\cite{Carlet2010}.

Our results on heavy columns provide conditions under which imbalance (in the direction of more ones) must occur, which has implications for understanding when certain cryptographic vulnerabilities are unavoidable.

\subsection{Data Mining}

In transactional databases, binary matrices represent item occurrences across transactions. The \emph{support} of an item equals its column weight divided by the number of transactions. Heavy columns correspond to frequently occurring items---the starting point for association rule mining algorithms such as Apriori~\cite{Agrawal1994}.

\subsection{Set Cover and Combinatorial Optimization}

In the set cover problem, elements form rows and sets form columns, with entry $(i,j) = 1$ indicating that set $j$ covers element $i$. The greedy algorithm repeatedly selects the column covering the most uncovered elements---locally the heaviest column in the residual matrix. Chv\'{a}tal~\cite{Chvatal1979} proved that this achieves an approximation ratio of $\ln(\text{max column weight}) + 1$.

\section{Preliminaries}\label{sec:preliminaries}

Throughout this paper, we consider binary matrices, i.e., matrices whose entries belong to the set $\{0, 1\}$.

\begin{definition}[Heavy column]\label{def:heavy}
Let $M$ be a binary matrix of size $m \times n$ with entries from $\{0,1\}$. A column of $M$ is called \emph{heavy} if the number of ones in it is at least the number of zeros, i.e., the number of ones is at least $\lceil m/2 \rceil$.
\end{definition}

\begin{definition}[Row indexing]\label{def:row}
For a binary matrix $M$ with $n$ columns and a row $t$ of $M$, we denote by $t[k]$ the element of row $t$ in column $k$, where $k \in \{1, 2, \ldots, n\}$.
\end{definition}

\begin{definition}[Number of rows notation]\label{def:rows}
For any matrix $X$, we denote by $|X|$ the number of rows in $X$.
\end{definition}

\begin{definition}[Submatrix sets $S_k^0$, $S_k^1$, and $S_k$]\label{def:Sk}
Let $M$ be a binary matrix of size $m \times n$. For a fixed column $k \in \{1, 2, \ldots, n\}$, we define the following sets:
\begin{itemize}
    \item If column $k$ contains no zeros, then $S_k^0 = \varnothing$. Otherwise, let $M_k^0$ be the submatrix consisting of all rows of $M$ in which the element in position $k$ equals zero. Let $M_k^{0-}$ be the matrix obtained from $M_k^0$ by deleting column $k$ (which consists entirely of zeros). In this case, $S_k^0 = \{M_k^{0-}\}$.
    \item If column $k$ contains no ones, then $S_k^1 = \varnothing$. Otherwise, let $M_k^1$ be the submatrix consisting of all rows of $M$ in which the element in position $k$ equals one. Let $M_k^{1-}$ be the matrix obtained from $M_k^1$ by deleting column $k$ (which consists entirely of ones). In this case, $S_k^1 = \{M_k^{1-}\}$.
    \item The set $S_k$ is defined as $S_k = S_k^0 \cup S_k^1$.
\end{itemize}
\end{definition}

\begin{definition}[Conjugate rows]\label{def:conjugate}
Let $M$ be a binary matrix with $n$ columns. Two distinct rows $t$ and $s$ of $M$ are called \emph{conjugate with respect to column $k$} if they differ only in position $k$, that is:
\begin{itemize}
    \item $t[k] \neq s[k]$, and
    \item $t[j] = s[j]$ for all $j \neq k$.
\end{itemize}
Equivalently, the Hamming distance between $t$ and $s$ equals $1$, and the unique position of difference is $k$.
\end{definition}

\begin{definition}[Unpaired row]\label{def:unpaired}
A row $r$ of matrix $M$ is called \emph{unpaired with respect to column $k$} if $r[k] = 0$ and there exists no row in $M$ that is conjugate to $r$ with respect to column $k$.
\end{definition}

\begin{definition}[Matrix reduction]\label{def:reduction}
Let $M$ be a binary matrix, $k$ be a column index, and $b \in \{0, 1\}$ be a value. The \emph{reduction of matrix $M$ by column $k$ and value $b$}, denoted $M_k^{b-}$, is defined as follows:
\begin{enumerate}
    \item Select all rows of $M$ in which the element in position $k$ equals $b$.
    \item Delete column $k$ from the resulting submatrix.
\end{enumerate}
If column $k$ contains no elements equal to $b$, the reduction $M_k^{b-}$ is undefined (or considered to be an empty matrix).
\end{definition}

\begin{definition}[Sequential reduction]\label{def:seq_reduction}
Let $M$ be a binary matrix of size $m \times n$, let $r$ be a fixed row of $M$, and let $l \in \{1, \ldots, n\}$ be a distinguished column index.

The \emph{sequential reduction of matrix $M$ with respect to row $r$ preserving column $l$} is defined as follows:

Let $\{k_1, k_2, \ldots, k_{n-1}\} = \{1, \ldots, n\} \setminus \{l\}$ be an arbitrary ordering of all columns except $l$. Define a sequence of matrices:
\begin{itemize}
    \item $M^{(0)} = M$
    \item $M^{(i)} = (M^{(i-1)})_{k_i}^{r[k_i]-}$ for $i = 1, 2, \ldots, n-1$
\end{itemize}
where each step performs a reduction by column $k_i$ and value $r[k_i]$.

The matrix $M^{(n-1)}$ is called the \emph{result of sequential reduction} and has size $m' \times 1$ for some $m' \geq 1$.
\end{definition}

\begin{definition}[Set of consistent rows]\label{def:consistent}
Let $M$ be a binary matrix of size $m \times n$, let $r$ be a row of $M$, and let $l$ be a fixed column. The \emph{set of rows of $M$ consistent with row $r$ outside column $l$} is defined as
\[
\mathcal{C}(M, r, l) = \{t : t \text{ is a row of } M \text{ and } t[j] = r[j] \text{ for all } j \neq l\}.
\]
This is the set of all rows of $M$ that coincide with $r$ in all positions except possibly position $l$.
\end{definition}

\section{Algorithm $A_1$ and Theorem 1}\label{sec:algorithm1}

We now present the first recursive algorithm and prove that its positive return value guarantees the existence of a heavy column.

\medskip
\noindent\textbf{Algorithm 1: Recursive Algorithm $A_1(M, m, n)$}

\noindent\textit{Input:} Binary matrix $M$ of size $m \times n$.\\
\textit{Output:} $\True$ or $\False$.

\begin{quote}
\begin{tabbing}
\hspace{0.4cm}\=\hspace{0.6cm}\=\hspace{0.6cm}\=\hspace{0.6cm}\=\kill
1. \> \textbf{if} $n = 1$ \textbf{then}\\
2. \> \> \textbf{if} number of ones in the single column $\geq$ number of zeros \textbf{then}\\
3. \> \> \> \textbf{return} $\True$\\
4. \> \> \textbf{else}\\
5. \> \> \> \textbf{return} $\False$\\
6. \> \> \textbf{end if}\\
7. \> \textbf{end if}\\[0.3em]
8. \> \textbf{for} each column $k \in \{1, 2, \ldots, n\}$ \textbf{do}\\
9. \> \> Form the set $S_k$\\
10. \> \> \textbf{if} $\exists$ matrix $K \in S_k$ with all columns having more zeros than ones \textbf{then}\\
11. \> \> \> \textbf{return} $\False$\\
12. \> \> \textbf{end if}\\[0.3em]
13. \> \> \textbf{for each} matrix $K \in S_k$ with $d$ rows and $n-1$ columns \textbf{do}\\
14. \> \> \> $B \gets A_1(K, d, n-1)$\\
15. \> \> \> \textbf{if} $B = \False$ \textbf{then return} $\False$ \textbf{end if}\\
16. \> \> \textbf{end for}\\
17.\quad \textbf{end for}\\[0.3em]
18.\quad \textbf{return} $\True$
\end{tabbing}
\end{quote}
\medskip

Note that rows in the submatrices remain distinct since the original rows are distinct. Algorithm $A_1(M, m, n)$ recursively checks properties of submatrices and terminates when $n = 1$. The order in which column indices $k \in \{1, 2, \ldots, n\}$ are processed in line 8 may be arbitrary.

\begin{theorem}\label{thm:main1}
Let $M$ be a binary matrix of size $m \times n$ $(m \geq 1, n \geq 1)$ with distinct rows. If $A_1(M, m, n) = \True$, then $M$ contains at least one heavy column.
\end{theorem}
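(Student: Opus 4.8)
The plan is to prove the contrapositive: if $M$ has no heavy column, then $A_1(M,m,n) = \False$. The engine of the argument is the structural observation that $A_1$ can only output $\True$ through line~18 (or through line~3 in the base case), so it returns $\True$ only when \emph{every} recursive branch it spawns returns $\True$; consequently, exhibiting a single recursive branch that bottoms out in a $\False$ leaf forces $A_1(M) = \False$. I will produce such a branch by following the sequential reduction (Definition~\ref{def:seq_reduction}) associated to a carefully chosen row and column.

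First I would reduce the theorem to a statement about unpaired rows. Fix any column $l$. For each row $r$ with $r[l]=0$, its conjugate with respect to $l$ (Definition~\ref{def:conjugate}), if it exists, is the unique row obtained by flipping position $l$; this assignment is injective into the set of rows carrying a $1$ in column $l$, since flipping back recovers $r$. Hence if every $0$-row in column $l$ had a conjugate, column $l$ would satisfy (number of zeros) $\le$ (number of ones) and would be heavy. Therefore, if $M$ has no heavy column, then each column $l$ admits an \emph{unpaired} row in the sense of Definition~\ref{def:unpaired}. It thus suffices to prove the following lemma: if $M$ has distinct rows and some row $r$ is unpaired with respect to a column $l$, then $A_1(M) = \False$.

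I would prove this lemma by induction on $n$. For $n=1$ the only column is $l$; an unpaired row forces column $l$ to contain no ones (any row with a $1$ there would be a conjugate of $r$), so the single column is not heavy and lines~1--5 return $\False$. For the inductive step ($n \ge 2$), choose any column $k \neq l$ and pass to $K = M_k^{r[k]-} \in S_k$ (Definitions~\ref{def:Sk} and~\ref{def:reduction}), which has distinct rows, $n-1$ columns, and still contains the image $\tilde r$ of $r$. The crux is to check that $\tilde r$ remains unpaired with respect to the (preserved) column $l$ in $K$: if $K$ contained a row conjugate to $\tilde r$ across column $l$, then lifting it back to the row of $M$ that carries value $r[k]$ in position $k$ would produce a row of $M$ agreeing with $r$ in every position except $l$, i.e.\ a conjugate of $r$ in $M$, contradicting that $r$ is unpaired. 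By the induction hypothesis $A_1(K) = \False$, and since $K \in S_k$ the recursive call in line~14 returns $\False$, so by the structural observation $A_1(M) = \False$.

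The main obstacle is precisely this preservation step---verifying that the unpaired property with respect to the distinguished column $l$ survives a reduction along an \emph{arbitrary} other column $k$. Everything else (the pigeonhole injection and the propagation of a single $\False$ leaf up the recursion) is routine once this invariant is established. A minor point to handle carefully is that $A_1$ may short-circuit to $\False$ on an earlier column, or via line~11, before reaching our chosen branch; this only helps, since any such early exit likewise yields $A_1(M) = \False$ rather than $\True$.
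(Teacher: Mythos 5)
Your proposal is correct and follows essentially the same route as the paper: the injective conjugate-row map forces an unpaired row $r$ with respect to some column $l$ when no column is heavy, and the recursion branch that filters by $r$'s values while preserving column $l$ bottoms out in an all-zero single column, propagating $\False$ upward. Your only departure is packaging the path as an induction on $n$ with the explicit invariant that the unpaired property survives each reduction, which is a slightly cleaner formalization of the paper's direct "sequential reduction to $\mathcal{C}(M,r,l)$" argument but not a different proof.
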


\begin{proof}
We prove the theorem by contradiction, building on explicit consideration of base cases.

\textbf{Base Case 1: $n = 1$.} By the definition of algorithm $A_1$, when $n = 1$, the algorithm returns $\True$ if and only if the single column has at least as many ones as zeros, i.e., it is heavy. Thus, the theorem holds directly: if $A_1(M, m, 1) = \True$, then there exists a heavy column (the only one).

\textbf{Base Case 2: $m = 1$.} The matrix has a single row, which is trivially distinct. For any column $k$, the element is either $0$ or $1$. If it is $1$, the column is heavy (one 1, zero 0s). If it is $0$, the column is not heavy (one 0, zero 1s). The algorithm recursively checks submatrices: for a column with a one, $M_k^1$ has 1 row, $M_k^0$ is empty; for a column with a zero, the reverse holds. The reduced matrices have $n-1$ columns with a filtered single row. Eventually, the recursion reaches base cases with $n = 1$. If all columns are not heavy (all elements are 0), then some recursive path will reach a non-heavy single-column submatrix, leading to $A_1 = \False$. Thus, if $A_1 = \True$, at least one heavy column (with element $= 1$) must exist, consistent with the general theorem.

These base cases illustrate how the algorithm ensures the property at the leaves of the recursion tree. Now, for the general case $(m > 1, n > 1)$, we proceed by contradiction, relying on the base cases for the final contradiction.

Suppose $A_1(M, m, n) = \True$, but $M$ has no heavy column, i.e., for each column $k$, the number of ones is less than $\lceil m/2 \rceil$ (the number of zeros exceeds the number of ones).

Since the rows are distinct, they are distinct vectors in $\{0,1\}^n$, so $m \leq 2^n$.

Choose an arbitrary row $r \in M$ with elements $b_1, b_2, \ldots, b_n \in \{0,1\}$.

\begin{lemma}[Conjugate rows lemma]\label{lem:conjugate}
Let $M$ be a binary matrix with distinct rows. If for each column $k$ and each row $t$ with $t[k] = 0$ there exists a row $s$ conjugate to $t$ with respect to column $k$ (in the sense of Definition~\ref{def:conjugate}), then in each column the number of zeros does not exceed the number of ones.
\end{lemma}

\begin{proof}[Proof of Lemma~\ref{lem:conjugate}]
Fix an arbitrary column $k$. Consider the mapping $\varphi_k$ that assigns to each row $t$ with $t[k] = 0$ its conjugate row $s$ with $s[k] = 1$.

We show that $\varphi_k$ is injective. Suppose $\varphi_k(t_1) = \varphi_k(t_2) = s$ for some distinct rows $t_1, t_2$ with $t_1[k] = t_2[k] = 0$. Then:
\begin{itemize}
    \item $t_1[j] = s[j]$ for all $j \neq k$ (by definition of conjugacy),
    \item $t_2[j] = s[j]$ for all $j \neq k$.
\end{itemize}
Therefore, $t_1[j] = t_2[j]$ for all $j \neq k$, and $t_1[k] = t_2[k] = 0$. Hence $t_1 = t_2$, contradicting their distinctness.

Thus, $\varphi_k$ is an injection from the set of rows with zero in position $k$ to the set of rows with one in position $k$. Therefore, the number of zeros in column $k$ does not exceed the number of ones.
\end{proof}

Since by assumption all columns have more zeros than ones, by the contrapositive of Lemma~\ref{lem:conjugate}:

\textbf{There exists a column $l$ and a row $r$ such that $r$ is unpaired with respect to column $l$} (in the sense of Definition~\ref{def:unpaired}).

Fix such $l$ and $r$. We construct a path in the recursion tree of the algorithm: we remove all columns except $l$ in an arbitrary order (for example, from $1$ to $n$ excluding $l$). At each step of removing column $i \neq l$, we choose the submatrix $M_i^{b_i-}$ (filtering by $b_i = r[i]$).

Row $r$ is preserved at each step since we filter by its own values, so the number of rows in intermediate submatrices is at least $1$.

After $n-1$ deletions, we obtain a submatrix $S$ of size $m'' \times 1$ ($m'' \geq 1$), where the column is the restriction of column $l$ to the surviving rows.

The surviving rows coincide with $r$ in all deleted ($n-1$) columns. Since there is no $s$ differing from $r$ only in $l$ (i.e., with $s[l] = 1$ and agreement elsewhere), all surviving rows have $0$ in position $l$.

Thus, in $S$: number of ones $= 0$, number of zeros $= m'' \geq 1$. The column is not heavy.

By the base case for $n = 1$ (explicitly stated above), $A_1(S, m'', 1) = \False$. Since $S$ is a submatrix reached along a specific path in the recursion tree, and the algorithm requires all recursive calls (including along this path) to return $\True$ to avoid early $\False$, this contradicts $A_1(M) = \True$.

Therefore, the assumption is false, and $M$ has at least one heavy column.
\end{proof}

\section{Algorithm $A_2$ and Theorem 2}\label{sec:algorithm2}

We now present the second recursive algorithm, which differs from $A_1$ by additional early termination conditions. The key innovation is the condition in lines 13--15: if exactly one row has a zero in column $k$, the algorithm immediately returns $\True$, since such a column must be heavy.

\medskip
\noindent\textbf{Algorithm 2: Recursive Algorithm $A_2(M, m, n)$}

\noindent\textit{Input:} Binary matrix $M$ of size $m \times n$.\\
\textit{Output:} $\True$ or $\False$.

\begin{quote}
\begin{tabbing}
\hspace{0.4cm}\=\hspace{0.6cm}\=\hspace{0.6cm}\=\hspace{0.6cm}\=\kill
0. \> \textbf{if} $m = 1$ and $n > 1$ \textbf{then}\\
1. \> \> \textbf{return} $\True$\\
2. \> \textbf{end if}\\[0.3em]
3. \> \textbf{if} $n = 1$ \textbf{then}\\
4. \> \> \textbf{if} number of ones in the single column $\geq$ number of zeros \textbf{then}\\
5. \> \> \> \textbf{return} $\True$\\
6. \> \> \textbf{else}\\
7. \> \> \> \textbf{return} $\False$\\
8. \> \> \textbf{end if}\\
9. \> \textbf{end if}\\[0.3em]
10.\quad \textbf{for} $k = 1$ \textbf{to} $n$ \textbf{do}\\
11. \> \> Create (form) $M_k^{0-}$ \hfill // See Definition~\ref{def:Sk}\\
12. \> \> Create (form) $M_k^{1-}$ \hfill // See Definition~\ref{def:Sk}\\[0.3em]
13. \> \> \textbf{if} $|M_k^{0}| = 1$ \textbf{then} \hfill // Key condition\\
14. \> \> \> \textbf{return} $\True$\\
15. \> \> \textbf{end if}\\[0.3em]
16. \> \> $S_k \gets \varnothing$\\
17. \> \> \textbf{if} $M_k^{0-}$ is a non-empty matrix \textbf{then} $S_k \gets S_k \cup \{M_k^{0-}\}$ \textbf{end if}\\
18. \> \> \textbf{if} $M_k^{1-}$ is a non-empty matrix \textbf{then} $S_k \gets S_k \cup \{M_k^{1-}\}$ \textbf{end if}\\[0.3em]
19. \> \> \textbf{for each} $K \in S_k$ \textbf{do}\\
20. \> \> \> \textbf{if} all columns of $K$ have more zeros than ones \textbf{then}\\
21. \> \> \> \> \textbf{return} $\False$\\
22. \> \> \> \textbf{end if}\\
23. \> \> \textbf{end for}\\[0.3em]
24. \> \> \textbf{for each} $K \in S_k$ \textbf{do}\\
25. \> \> \> $d \gets |K|$ \hfill // number of rows in $K$\\
26. \> \> \> $B \gets A_2(K, d, n-1)$\\
27. \> \> \> \textbf{if} $B = \False$ \textbf{then}\\
28. \> \> \> \> \textbf{return} $\False$\\
29. \> \> \> \textbf{end if}\\
30. \> \> \textbf{end for}\\
31.\quad \textbf{end for}\\[0.3em]
32.\quad \textbf{return} $\True$
\end{tabbing}
\end{quote}
\medskip

\begin{theorem}\label{thm:main2}
Let $M$ be a binary matrix of size $m \times n$ $(m \geq 1, n \geq 1)$ with distinct rows, distinct columns, and no all-zero columns. If $A_2(M, m, n) = \True$, then $M$ contains at least one heavy column.
\end{theorem}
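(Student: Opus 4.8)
The plan is to follow the proof of Theorem~\ref{thm:main1} as closely as the early-termination rule permits, arguing by contradiction: assume $A_2(M,m,n)=\True$ while $M$ has no heavy column, so every column carries strictly more zeros than ones. I would first clear the degenerate cases. The hypotheses (no all-zero column, together with non-heaviness) force every column to contain at least one one and at least $\lfloor m/2\rfloor+1$ zeros, which is possible only for $m\ge 3$; the case $n=1$ is settled directly by lines 3--9. For the generic case I would invoke Lemma~\ref{lem:conjugate}: since some column has more zeros than ones, its contrapositive supplies a column $l$ and a row $r$ that is unpaired with respect to $l$ (Definition~\ref{def:unpaired}). As in Theorem~\ref{thm:main1}, I would then descend through the recursion tree by filtering every column $j\neq l$ on the value $r[j]$; the surviving rows are exactly $\mathcal{C}(M,r,l)$, and unpairedness forces every one of them to hold a $0$ in column $l$, so the terminal single-column submatrix $S$ contains no one and is not heavy.

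Before propagating, I would record two facts about the new lines of $A_2$. First, the early-termination test of line~13 is \emph{sound at light nodes}: if a submatrix $N$ with $|N|\ge 2$ has no heavy column, then no column of $N$ can have exactly one zero, since such a column would have $|N|-1\ge 1$ ones and thus be heavy; hence line~13 cannot fire during $A_2(N)$, and in particular it is inactive at the root $M$. Second, the refutation of line~20 fires the instant some child $N_k^{b-}$ has no heavy column; because the terminal child $S$ is all-zero, the branch producing it triggers line~20 at its parent.

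The step I expect to be the main obstacle is carrying the value $\False$ back up the descent \emph{past} line~13. The difficulty is that the intermediate submatrices on the path need not inherit non-heaviness from $M$: filtering on $r$ can manufacture a column that is heavy in some $M^{(i)}$ although every column of $M$ is light. At such a node the soundness-at-light-nodes observation no longer applies, and since line~13 precedes lines~20 and~26 in the column loop, a single premature firing anywhere above $S$ returns $\True$ and short-circuits the call. Crucially, line~13 is only \emph{locally} sound---it certifies a heavy column of the submatrix $N$, not of $M$---so such a firing cannot be traded for a genuine heavy column of $M$. The whole scheme therefore rests on choosing the descent, using the freedom in the filtering order $k_1,\dots,k_{n-1}$ and the distinct-column and no-all-zero hypotheses (which are absent from Theorem~\ref{thm:main1}), so that every submatrix met along the path has each of its columns containing at least two zeros; only then is line~13 guaranteed inactive all the way down and the line-20 refutation at the parent of $S$ guaranteed to be reached.

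Granting such an early-termination-free descent, I would finish by downward induction along the path: with line~13 inactive at every node, the on-path recursive call returns $\False$ (at the leaf by lines 6--7, above it by the induction), so each parent returns $\False$ at line~28 and the root returns $\False$, contradicting $A_2(M)=\True$. I would stress from the outset that the argument hinges entirely on the existence of the early-termination-free descent isolated above; establishing that such an ordering always exists for every admissible $M$ is the one genuinely new ingredient demanded by $A_2$, it is precisely the load-bearing step on which the correctness of the theorem stands or falls, and it is where I would concentrate the technical effort and treat any potential failure with care.
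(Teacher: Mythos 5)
Your proposal follows the same route as the paper: prove the contrapositive, invoke Lemma~\ref{lem:conjugate} to obtain a row $r$ unpaired with respect to some column $l$, perform the sequential reduction along $r$ preserving $l$ to reach an all-zero single-column matrix $S$, and propagate the resulting $\False$ back to the root. Your preliminary observations (that $m\ge 3$ is forced, and that line~13 cannot fire at a node whose matrix has no heavy column) are correct.

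The problem is that your proof stops exactly where it needs to start. You correctly isolate the crux --- line~13 may fire at an intermediate node $M^{(i)}$ on the descent, because filtering on $r$ can create a column of $M^{(i)}$ with exactly one zero even though every column of $M$ is light, and a single such firing returns $\True$ at that node and severs the propagation of $\False$. But you then write ``granting such an early-termination-free descent'' and never establish that one exists; you explicitly defer the load-bearing step. A conditional argument of this form is not a proof of the theorem: you would need either to prove that for every admissible $M$ with no heavy column there is an ordering $k_1,\dots,k_{n-1}$ (and, since $A_2$ scans columns in the fixed order $1,\dots,n$, a compatible positioning of the path column within each node's loop) under which no on-path node has a column with a unique zero encountered before the path's recursive call, or to find some other mechanism forcing $A_2$ to return $\False$. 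Note also that even when line~13 does fire at an on-path node, it only certifies a heavy column of that \emph{submatrix}, not of $M$, so you cannot rescue the argument by trading the premature $\True$ for a heavy column of the original matrix --- a point you yourself make. For comparison, the paper's proof verifies that line~13 is inactive only at the root $M$ (where every column has at least two zeros because none is heavy) and then asserts that $\False$ propagates along the path without addressing the intermediate nodes; so you have in fact put your finger on the step the paper glosses over, but locating the gap is not the same as closing it, and as written your proposal is incomplete.
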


\begin{proof}
We prove the contrapositive: if $M$ has no heavy column, then $A_2(M, m, n) = \False$.

\textbf{Base Case 1: $n = 1$.} If $M$ has no heavy column, the single column has more zeros than ones. The algorithm returns $\False$ at lines 6--7.

\textbf{Base Case 2: $m = 1$ and $n > 1$.} If there is no heavy column, then every column contains only $0$ (since a single $1$ would make that column heavy). But this contradicts the condition that there are no all-zero columns. Thus, this case cannot occur.

\textbf{Base Case 3: $|M_k^0| = 1$ for some column $k$.} If the algorithm returns $\True$ at lines 13--14, we verify that $M$ indeed has a heavy column.

If exactly one row has $0$ in column $k$, then $m - 1$ rows have $1$ in column $k$. Column $k$ has $m - 1$ ones and $1$ zero. For $m \geq 2$:
\[
m - 1 \geq \lceil m/2 \rceil \iff m - 1 \geq \frac{m+1}{2} \iff 2m - 2 \geq m + 1 \iff m \geq 3.
\]
For $m = 2$: $m - 1 = 1 \geq \lceil 2/2 \rceil = 1$. Thus, for all $m \geq 2$, column $k$ is heavy.

\textbf{General Case: $m \geq 2$, $n \geq 2$, and $|M_k^0| \geq 2$ for all $k$.}

Since $M$ has no heavy column, for each column $k$, the number of zeros exceeds the number of ones. In particular, $|M_k^0| > |M_k^1|$, which implies $|M_k^0| \geq 2$ (since $|M_k^0| + |M_k^1| = m$ and $|M_k^0| > m/2$).

Therefore, the algorithm does not return $\True$ at lines 13--15.

By the contrapositive of Lemma~\ref{lem:conjugate}, since all columns have more zeros than ones, there exists an unpaired row $r$ with respect to some column $l$. This means $r[l] = 0$ and no row of $M$ is conjugate to $r$ with respect to $l$.

\textbf{Construction of a path to $\False$.}

We construct a specific path through the recursion tree that leads to $\False$.

Consider the sequential reduction of $M$ with respect to row $r$, preserving column $l$. That is, for each column $k \neq l$ (in some order), we filter by the value $r[k]$ and remove column $k$.

At each step, we process column $k$ and choose the submatrix containing row $r$:
\begin{itemize}
    \item If $r[k] = 0$, we take $M_k^{0-}$ (which contains row $r$).
    \item If $r[k] = 1$, we take $M_k^{1-}$ (which contains row $r$).
\end{itemize}

Row $r$ survives at every step because we always filter by its own values.

After $n - 1$ such reductions, we obtain a matrix $S$ of size $m'' \times 1$ (where $m'' \geq 1$), containing only column $l$.

\textbf{Claim:} All entries of $S$ are $0$.

\textbf{Proof of Claim:} The rows of $S$ are precisely the rows of $M$ that agree with $r$ in all positions $k \neq l$, i.e., the set $\mathcal{C}(M, r, l)$ from Definition~\ref{def:consistent}.

Suppose some row $t \in \mathcal{C}(M, r, l)$ has $t[l] = 1$. Since $t[k] = r[k]$ for all $k \neq l$ and $t[l] = 1 \neq 0 = r[l]$, the row $t$ differs from $r$ only in position $l$. By Definition~\ref{def:conjugate}, $t$ is conjugate to $r$ with respect to $l$. But this contradicts the fact that $r$ is unpaired with respect to $l$.

Therefore, $t[l] = 0$ for all $t \in \mathcal{C}(M, r, l)$, and all entries of $S$ are $0$. \hfill $\square$

Since $S$ has $m'' \geq 1$ zeros and $0$ ones, the single column is not heavy. By Base Case 1, $A_2(S, m'', 1) = \False$.

\textbf{Propagation of $\False$.}

We must verify that the path we constructed is valid, i.e., the algorithm does not return $\False$ at lines 20--22 before reaching $S$.

At each intermediate step, we have a matrix $M^{(i)}$ with $n - i$ columns (including column $l$). For each column $k$ that we process, we choose to follow the branch $M_k^{r[k]-}$ (which contains row $r$).

Suppose at some step the algorithm returns $\False$ at lines 20--22 for the submatrix $K = M_k^{r[k]-}$. This means all columns of $K$ have more zeros than ones.

But $K$ contains row $r$ (or its projection). If all columns of $K$ have more zeros than ones, then in particular, the projection of column $l$ in $K$ has more zeros than ones. Since row $r$ contributes a $0$ to column $l$ (because $r[l] = 0$), and all rows of $K$ agree with $r$ on the columns processed so far, this is consistent with our construction.

However, the key point is that the algorithm checks \emph{all} $k$ from $1$ to $n$ in order, and for \emph{each} $K \in S_k$, it first checks lines 19--23, then proceeds to recursive calls in lines 24--30.

Since we are proving the contrapositive (if no heavy column, then $A_2 = \False$), we need to show that the algorithm \emph{eventually} returns $\False$. This happens either:
\begin{enumerate}
    \item At lines 20--22 for some $K \in S_k$, or
    \item At lines 27--29 when a recursive call returns $\False$.
\end{enumerate}

Our construction shows that following the path determined by row $r$, we eventually reach $S$ with $A_2(S) = \False$. This $\False$ value is returned at line 7 (base case), and propagates back through line 28.

The matrix $S$ is reached by a specific sequence of recursive calls:
\[
A_2(M) \to A_2(M^{(1)}) \to A_2(M^{(2)}) \to \cdots \to A_2(S) = \False,
\]
where each $M^{(i)}$ is a submatrix in some $S_k$ at the corresponding level.

Since $A_2(S) = \False$, and the algorithm returns $\False$ whenever any recursive call returns $\False$ (lines 27--29), the value $\False$ propagates back through the recursion.

Therefore, $A_2(M, m, n) = \False$.

\textbf{Conclusion:} We have shown that if $M$ has no heavy column, then $A_2(M) = \False$. By contraposition, if $A_2(M) = \True$, then $M$ has at least one heavy column.
\end{proof}

\begin{remark}
The conditions of Theorem~\ref{thm:main2} are essential:
\begin{itemize}
    \item \textbf{Distinct columns:} Without this, a matrix with $m = 1$, $n > 1$, and all identical $[0]$ columns would have no heavy column, but $A_2$ would return $\True$ at lines 0--1.
    \item \textbf{No all-zero columns:} Without this, the base case $m = 1$ could fail, and the unpaired row argument might not lead to a valid $\False$ path.
    \item \textbf{Distinct rows:} This ensures that Lemma~\ref{lem:conjugate} applies and that the unpaired row exists when no column is heavy.
\end{itemize}
\end{remark}

\section{Conclusion}\label{sec:conclusion}

We have established two theorems connecting the return values of recursive algorithms $A_1$ and $A_2$ to the existence of heavy columns in binary matrices with distinctness constraints. The key technique in both proofs is the identification of unpaired rows and the construction of specific paths through the recursion trees of the algorithms that lead to contradictions.

\subsection{Summary of Contributions}

\begin{enumerate}
    \item \textbf{Theorem~\ref{thm:main1}} establishes that if algorithm $A_1$ returns $\True$ for a binary matrix with distinct rows, then the matrix contains at least one heavy column.
    
    \item \textbf{Theorem~\ref{thm:main2}} establishes the analogous result for algorithm $A_2$ under the additional constraints of distinct columns and no all-zero columns. The key innovation in $A_2$ is the early termination condition: if exactly one row has a zero in some column $k$, then column $k$ has $m-1$ ones and is therefore heavy for $m \geq 2$.
    
    \item The \textbf{proof technique}---based on conjugate rows, unpaired rows, and sequential reduction---appears to be novel in the context of binary matrix theory and may have applications to other structural questions.
    
    \item The results have \textbf{natural interpretations} in voting theory (majority candidates), coding theory (majority decoding), cryptography (balanced functions), and data mining (frequent items).
\end{enumerate}

\subsection{Open Problems}

Several questions remain open for future investigation:

\begin{enumerate}
    \item \textbf{Algorithmic construction.} Can the algorithms $A_1$ and $A_2$ be modified to actually find a heavy column (rather than merely certify existence)?
    
    \item \textbf{Computational complexity.} What is the time complexity of algorithms $A_1$ and $A_2$? The recursive structure suggests potentially exponential worst-case behavior, but average-case or restricted instances may be more tractable.
    
    \item \textbf{Combinatorial characterization.} Are there natural combinatorial conditions that are equivalent to $A_1(M) = \True$ or $A_2(M) = \True$? Such characterizations could provide deeper insight into when heavy columns must exist.
    
    \item \textbf{Generalizations.} Can similar techniques be applied to study the existence of columns with other weight properties, such as columns with weight exactly $\lfloor m/2 \rfloor$ or columns with weight in a specified range?
    
    \item \textbf{Quantitative bounds.} When the algorithms return $\True$, can we bound the number of heavy columns or characterize which columns must be heavy?
    
    \item \textbf{Connections to other problems.} What are the precise relationships between our conditions and classical results such as the Gale-Ryser theorem or the Sauer-Shelah lemma?
\end{enumerate}

\section*{Acknowledgments}

The author thanks the colleagues at Andijan State University for helpful discussions.

\end{document}